\begin{document}

\newtheorem{thm}{Theorem}
\newtheorem{prop}{Proposition}
\newtheorem{lem}{Lemma}
\newtheorem{defn}{Definition}
\newtheorem{ex}{Example}
\newtheorem{cor}{Corollary}
\newtheorem{prn}{Principle}
\newtheorem{case}{Case}
%
\title{Technical Report\\Classifying Unrooted Gaussian Trees under Privacy Constraints}

\author{Ali Moharrer,
Shuangqing Wei,
George T. Amariucai, 
and Jing Deng}
\maketitle
\footnotetext[1]{Part of this work is submitted to IEEE Globecom Conference, San Diego, CA, USA}
\footnotetext[2]{A. Moharrer, and S. Wei are with the school of Electrical Engineering and Computer Science, Louisiana State University, Baton Rouge, LA 70803, USA (Email: amohar2@lsu.edu, swei@lsu.edu). 

G. T. Amariucai is with the department of Electrical and Computer Engineering, Iowa State University, Ames, IA, USA (Email: gamari@iastate.edu). 

J. Deng is with the department of Computer Science, University of North Carolina at Greensboro, Greensboro, NC, USA (Email: jing.deng@uncg.edu).}



\begin{abstract}
In this work, our objective is to find out how topological and algebraic properties of unrooted Gaussian tree models determine their security robustness, which is measured by our proposed max-min information (MaMI) metric. Such metric quantifies the amount of common randomness extractable through public discussion between two legitimate nodes under an eavesdropper attack.  We show some general topological properties that the desired max-min solutions shall satisfy. Under such properties, we develop conditions under which comparable trees are put together to form partially ordered sets (posets).  Each poset contains the most favorable structure as the poset leader, and the least favorable structure. Then, we compute the Tutte-like polynomial for each tree in a poset in order to assign a polynomial to any tree in a poset.  Moreover, we propose a novel  method, based on restricted integer partitions, to effectively enumerate all poset leaders. The results not only help us understand the security strength of different Gaussian trees, which is critical when we evaluate the information leakage issues for various jointly Gaussian distributed measurements in networks, but also provide us both an algebraic and a topological perspective in grasping some fundamental properties of such models.  
\end{abstract}

\section{Introduction}

In this work, we are interested in the problem of effectively extracting maximum amount of common randomness through public discussions  between Alice and Bob. 
 based on their locally measured and correlated random variables \cite{maurer,part1,part2}.  Such a goal shall be attained 
in the presence of an eavesdropper, 
Eve, whose observations are also correlated with those possessed by Alice and Bob.
Such scenarios are pervasive in cases where a secret key is to be established between Alice and Bob by tapping into randomness 
available in their surrounding physical world. For example, in a sensor network with $n$ nodes, whose local readings on, for instance, temperature or humidity,  
are dependent following certain joint probability distribution 
function, Alice and Bob have to decide which two variables $A$ and $B$ 
out of $N$ nodes  are to be selected, whose realizations are to be used for building a secret key against 
a passive attack by Eve. The eavesdropper has full access to $Z$,  one of the remaining $N-2$ nodes/variables, as well as those publicly exchanged messages between Alice and Bob,
who establish secrecy by following the protocols proposed in \cite{maurer,part1,part2}
including both information reconciliation and privacy amplification stages. It has been already shown in \cite{maurer,part1,part2} for a given selected
three variables $A$, $B$ and $Z$, the number of bits per channel use that are information theoretically secure is proved to be $I(A;B|Z)$,  the conditional mutual information 
between $A$ and $B$, given $Z$.  

In this work,  we are particularly interested in the following two questions: (1) If the joint probability distribution of $n$ random variables is 
representable using certain graphical models, what variables Alice and Bob should pick, subject to Eve's selection of $Z$. (2) How should we compare and evaluate the strength 
of multiple graphical models in terms of the need of extracting secret key bits by Alice and Bob?
To address the questions raised above, we have adopted a pessimistic approach in that Alice and Bob move first by choosing the two variables $A$ and $B$ out of $n$ variables, and 
thereafter Eve chooses the variable $Z$ from the remaining ones to minimize the resulting conditional mutual information. Consequently, the selection of 
$A$ and $B$ is thus to maximize the minimum value, which yields the solution to the corresponding maxmin problem, thereby providing the answer to the first question, 
under a  given graphical model. For the second question, we compare different graphical models based on their respective maxmin values of the conditional mutual
information. 
It should be noted that such a modeling has been coined as the \textit{security game} in several contexts \cite{game}.
In fact, the authors in \cite{game} define the \textit{secrecy capacity} metric similar to
our defined metric, which quantifies the maximum rate of reliable information transmitted from the 
source to destination, in presence of the eavesdropper.

Due to the vast parameter space of graphical models, we restrict our attention in this work to a set of joint probability distribution functions of $n$ variables whose conditional
independence relationships  can be featured in Gaussian trees to address the aforementioned problems. 
The Gaussian models have been extensively used in a variety of topics. 
In fact, recently some fundamental properties of Gaussian graphical models
have been tackled using algebraic methods
\cite{sullivant2008}, \cite{roozbehani2014}. In \cite{sullivant2008}
the author shows that when the underlying random variables are Gaussian,
conditional independence statements can be interpreted as algebraic
constraints on the parameter space of the global model. 
Also, in our previous study
\cite{wcnc}, we proved that under the above assumptions, Alice, Bob and Eve
form special relations with respect to each other.

To address the question of comparing different Gaussian trees in terms of their associated maximin values of the conditional mutual information, we first
impose a constraint on the set of joint distributions we consider by requiring them to share the same joint entropy, i.e. the same total amount of randomness, and then
we need to extensively study some fundamental properties of certain classes
of unrooted Gaussian tree models related to our proposed security and privacy metrics.
In particular, we propose a \textit{grafting} operation \cite{graft}, to transform one Gaussian tree to
another by moving specific edges. Using grafting, we establish
a binary relationship between Gaussian trees to determine
the level of privacy, and further obtain an ordering for Gaussian
trees.

In \cite{wcnc} we showed that for Gaussian trees with $n=\{4,5\}$ nodes, the \textit{Linear}
model has the largest maximin value, hence it is the most secure model for smaller size networks, which enables us to attain  
a full ordering for both of the cases. 
In this paper, we consider any general class of Gaussian tree. 
We prove that unlike the cases of small size Gaussian
trees, for $n\geq 6$, not all the structures can be fully ordered using
grafting operation. Hence, we propose \textit{partially ordered sets}
(posets) \cite{poset} containing tree structures, where some of the
structures can be compared with each other using the binary relationship
and the others are not comparable. 
Moreover, in order to model the Gaussian trees and the corresponding posets
more systematically, we also study the algebraic properties of
unrooted Gaussian tree models. 
Lastly, For MF trees from all posets, we show that
enumerating these poset leaders can be related to integer partitions
\cite{integer}. 
In particular, under a set of proposed principles, we
can efficiently and directly enumerate all poset leaders without going
through the iterative grafting operations. 
These structures, are specifically
important, since they are the most secure trees in their related posets.

The paper is organized as follows. Section \ref{system model} presents
the system model. We study the topological and algebraic properties
of Gaussian trees in Sections \ref{topo} and \ref{algebra}. Section
\ref{conclusion} gives the concluding remarks.

\section{System Model} \label{system model}

In this study, we consider the Gaussian joint density to capture the density
of $n$ entities in a public channel, \textit{i.e.},
$P_{\boldsymbol{x}}(x_1,x_2,...,x_n)\sim \emph{N}(\mu,\Sigma)$,
where $\mu=0$ is the mean vector and $\Sigma$ is the symmetric,
positive-definite covariance matrix of $n$ random variables. Furthermore,
we assume that the joint density can be characterized by a weighted
and unrooted tree $T=(V,E,W)$, where $V$ is the set of vertices/variables,
and $E$ is the set of edges showing the dependency relations between variables
\cite{simecek2006,sullivant2008,roozbehani2014}.  For a fair
comparison between any two Gaussian tree, we assume that the users in
all models have the same joint randomness, i.e., the same entropy.
In this case, it is shown that the entropy of $\boldsymbol{x}=(x_1,x_2,...,x_n)$ can be obtained
by $H=1/2\ln((2\pi e)^n|\Sigma|)$ \cite{cover2012}. Hence, in order to obtain a fixed entropy we have to
fix the determinant of the covariance matrix, i.e., $|\Sigma|=k$, 
where $k\in \mathbb{R}$ is a finite and non-zero constant.
In addition, similar as in \cite{simecek2006} we assume normalized
diagonal entries for the covariance matrix.  
This assumption, simplifies the subsequent 
analysis, and as we will observe, gives rise to a more 
compact and manageable results.
On the other hand, note that normalization of
diagonal entries does not change the dependency relations in a Gaussian
tree. This condition makes the off-diagonal entries of $\Sigma$
to be in the range $(-1,1)$, and $k\in (0,1)$.  Also, given a  covariance matrix with
normalized diagonal entries, the edge-weights of the Gaussian tree $T$
correspond to the covariance elements in $\Sigma$. In particular,
for any edge $e_{ij}\in E$, and its associated weight $w_{ij}\in
W$, we have $w_{ij}=\sigma_{ij}$, where $\sigma_{ij}$ is the $(i,j)$-th element of the covariance matrix $\Sigma$.
Also, under these conditions, the
covariance value between any given pair of nodes $i,j\in V$, given
their connecting path $p_{ij}=e_{i,i+1}e_{i+1,i+2}...e_{j-1,j}$, is
$\sigma_{ij}=w_{i,i+1}w_{i+1,i+2}...w_{j-1,j}$. In other words, the
covariance value $\sigma_{ij}$ is the product of edge-weights on the
path $p_{ij}$ \cite{chaudhuri2014}.

Here, we first give a proper definition for the max-min problem, under the
explained scenario.

\begin{defn}
Under the Gaussian tree model, legitimate entities Alice and Bob pick two
nodes $a$ and $b$ on the tree under the attack by an eavesdropper Eve who
selects the third and distinct node/variable $z$ on the same tree. The
objective of Alice and Bob is to select the pair $(a,b)$ to maximize
the minimum conditional mutual information $I(a;b|z)$. As a result,
we adopt $\max_{\{a,b\}} \min_z I(a;b|z, T)$ as a metric to measure the
privacy level of a given weighted Gaussian tree $T=(V, E, W)$. 
\end{defn}

For Gaussian random variables the conditional mutual information
$I(a;b|z)$ can be directly related to the \textit{partial correlation
coefficient}, which is defined as below \cite{chaudhuri2014},
\begin{align} \label{eq:partial1}
\rho^2_{ab|z}=\dfrac{(\sigma_{ab}-\sigma_{az}\sigma_{bz})^2}{(\sigma_{aa}-\sigma_{az}\sigma_{az})(\sigma_{bb}-\sigma_{bz}\sigma_{bz})}=1-e^{-2I(a;b|z)}
\end{align}
\noindent where $\sigma_{ab}=E[(a-\mu_a)(b-\mu_b)]$, the $(a,b)$-th element of
$\Sigma$, is the covariance value between variables $a$ and $b$ (with both
of them having zero mean).  From \eqref{eq:partial1}, we can see that the
conditional mutual information is a monotone increasing function of the
partial correlation coefficient. Hence, in the following, we use partial
correlation coefficient instead of the conditional mutual information
as the security and privacy metric.  Hence, the corresponding max-min
value for a given Gaussian tree $T=(V,E,W)$  can be restated as:
\begin{align} \label{eq:max-min}
S(T,W)=\max_{\{a,b\}} \min_z \rho^2_{(ab|z,T,W)}
\end{align}

\noindent which is used to compare and order different trees.

\section{Topological Properties of Gaussian Trees} \label{topo}

We first define the \textit{grafting} operation on Gaussian trees.
In \cite{graft}, the author proposes an operation called \textit{grafting} to 
order trees based on their \textit{algebraic connectivity}, which is basically
the second smallest eigenvalue ($\lambda_2$) of the Laplacian matrix. 
Here, we introduce a new operation on Gaussian trees to obtain the ordering 
among different structures. Since, our proposed operation is similar to the 
\textit{grafting} operation introduced in \cite{graft} (but, obviously in a totally
different concept), we use the same naming to define our favorite operation.
\begin{defn} \label{defn:graft}
Consider a tree $T$, and assume there exists a leaf edge $e$,
between the vertices $n_1$ and the leaf $n_2$. The node $n_1$ has degree two.
The grafting operation refers to cutting
the edge $e$ and attaching it to the other end of its parent edge,
i.e. $e'$, as shown in Figure \ref{fig:grafting}.  
\end{defn}

Note that since grafting is essentially a local operation, only the edge
$e$ changes its position: In $T_1$ there is an edge $e$ between the pair
$\{n_1,n_2\}$, while in $T_2$ this edge is between the pair $\{v,n_2\}$. All
other structures shown in Figure \ref{fig:grafting} (including everything
in the clouds) remain unchanged.

\begin{figure} [h!]
\centering 
\includegraphics[scale=0.5]{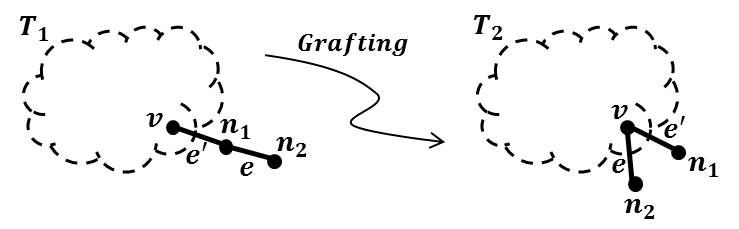}
\caption{$T_2$ is obtained from $T_1$ by grafting operation\label{fig:grafting}} 
\end{figure}

In \cite{wcnc} we showed that for any Gaussian tree the maximin value of the conditional mutual information $I(a;b|z)$ is chosen from those set of triplets in which $a$ and $b$ are adjacent and $z$ is neighbor to either $a$ or $b$. 
This result may seem intuitive; however, we showed that for directed Gaussian trees, there are many cases in which this
result does not hold. 

The computed partial correlation coefficient has the following form,
\begin{align} \label{eq:partial2}
\rho^2_{ab|z}=\dfrac{\sigma^2_{ab}(1-\sigma^2_{bz})}{1-\sigma^2_{ab}\sigma^2_{bz}}
\end{align}

\noindent where the covariance values $\sigma_{ab}$ and $\sigma_{bz}$
are the corresponding weights for the edges $e_{ab}$ and $e_{bz}$,
respectively. Note that in \eqref{eq:partial2} we implicitly assumed
$z$ is adjacent to $b$, hence if $z$ becomes adjacent to $a$, then
$\sigma_{bz}$ replaces with $\sigma_{az}$ in the equation.
We also proved the following result in \cite{wcnc},

\begin{lem} \label{lem:lemwcnc}
Consider the trees $T_1$ and $T_2$ shown in Figure \ref{fig:grafting},
both trees have the same set of labeling and edge weights except that
the label for the weight of the edge $e$ is switched from $w_{n_1,n_2}$ to $w_{v,n_2}$. More
precisely, for $T_2=(V,E',W')$ that is obtained from $T_1=(V,E,W)$
using grafting operation, we know all the elements in $E'$ and
$W'$ are the same as in $E$ and $W$, respectively, except the entry
corresponding to the edge $e$. For this element we have $e'_{v,n_2}=e_{n_1,n_2}$
and $w'_{v,n_2}=w_{n_1,n_2}$. Now, suppose the maximin value for $T_1$, and
$T_2$ are $S(T_1,W)$ and $S(T_2,W')$, respectively. Note that $W$
is any arbitrary set of edge-weights, and $W'$ is obtained from $W$
(by changing the covariance values associated with the grafted edge). We
have $S(T_1,W)\geq S(T_2,W')$.
\end{lem}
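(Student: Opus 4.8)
The plan is to reduce the comparison to a finite case analysis over adjacent node pairs, exploiting the fact recalled from \cite{wcnc} that for any Gaussian tree the maximin is attained at a triplet $(a,b,z)$ in which $a,b$ are adjacent and $z$ is a neighbor of $a$ or $b$; consequently $S(T,W)=\max_{\{a,b\}\in E}\min_{z\in(N(a)\cup N(b))\setminus\{a,b\}}\rho^2_{ab|z}$ (writing $N(\cdot)$ for the neighbor set of a node), and every partial correlation I must evaluate is of the closed form \eqref{eq:partial2}. First I would fix notation for the local picture of Figure \ref{fig:grafting}: write $w_1:=w_{v,n_1}$ for the (unchanged) parent edge and $w_2:=w_{n_1,n_2}=w'_{v,n_2}$ for the (unchanged) grafted edge, and let $C$ denote the neighbors of $v$ lying inside the cloud. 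The only data that grafting alters are (i) the three covariances among $\{v,n_1,n_2\}$, which in $T_1$ read $(\sigma_{vn_1},\sigma_{n_1n_2},\sigma_{vn_2})=(w_1,w_2,w_1w_2)$ and in $T_2$ read $(w_1,w_1w_2,w_2)$, and (ii) the neighborhoods, since $n_2$ leaves $N(n_1)$ and joins $N(v)$. All covariances and adjacencies not touching $n_2$ are identical in the two trees.

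Next I would enumerate the adjacent pairs of $T_2$ --- necessarily one of $(n_1,v)$, $(n_2,v)$, $(v,u)$ with $u\in C$, or a pair lying entirely in the cloud --- and for each exhibit a pair of $T_1$ whose inner minimum is at least as large. For a pair internal to the cloud, or for $(n_1,v)$, the feasible set for $z$ and all relevant covariances coincide across the two trees (for $(n_1,v)$ one checks that the choice $z=n_2$ contributes $\tfrac{w_1^2(1-w_2^2)}{1-w_1^2w_2^2}$ in both, because $n_2$ is a neighbor of the conditioned endpoint with weight $w_2$ in either tree), so the same pair gives an equal value in $T_1$. For $(v,u)$ with $u\in C$ the covariance $\sigma_{vu}$ is unchanged, but in $T_2$ the eavesdropper gains the extra admissible choice $z=n_2\in N(v)$; enlarging the feasible set can only lower the minimum, so the identical pair $(v,u)$ in $T_1$ scores at least as high.

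The \emph{decisive} case is $(n_2,v)$ in $T_2$, where $\sigma_{n_2v}=w_2$ and $z$ ranges over $\{n_1\}\cup C\subseteq N(v)$. Here the value is at most the $z=n_1$ term $\tfrac{w_2^2(1-w_1^2)}{1-w_2^2w_1^2}$. Against this I would put the pair $(n_1,n_2)$ of $T_1$: since $n_2$ is a leaf and $n_1$ has degree two, the eavesdropper's only admissible choice is $z=v$, adjacent to $n_1$ with weight $w_1$, giving exactly $\tfrac{w_2^2(1-w_1^2)}{1-w_2^2w_1^2}$. Thus this $T_1$ pair matches the $z=n_1$ term and hence dominates the whole minimum for $(n_2,v)$ in $T_2$; this is precisely where grafting costs security, because collapsing the buffer node $n_1$ exposes the leaf to all of $v$'s other (possibly strongly correlated) neighbors. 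Combining the four cases, every adjacent pair of $T_2$ is dominated in value by some adjacent pair of $T_1$, so taking maxima yields $S(T_1,W)\ge S(T_2,W')$.

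I expect the main obstacle to be twofold: first, verifying that the enumeration of $T_2$'s adjacent pairs is genuinely exhaustive and that the neighbor-only characterization from \cite{wcnc} may be used as an equality throughout (so that \eqref{eq:partial2} applies to each candidate and the restricted maximin equals the true one); and second, the careful covariance bookkeeping in the $(n_2,v)$ case, ensuring that the matched $T_1$ value equals the $z=n_1$ term and therefore bounds the minimum regardless of whether Eve actually prefers a cloud node. Everything else is routine once the local covariance table above is in hand.
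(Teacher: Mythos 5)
You cannot be checked against ``the paper's own proof'' here, because this paper contains no proof of Lemma \ref{lem:lemwcnc}: the lemma is imported from the prior work \cite{wcnc} (``We also proved the following result in \cite{wcnc}''), and none of the appendices argues it. Judged on its own merits, your argument is correct, and it is consistent with the machinery the paper does use elsewhere, namely the closed form \eqref{eq:partial2} and the adjacent-pair/neighbor-$z$ reduction (this is exactly how the ``max-min table'' reasoning in Appendix \ref{app:generalcut} is organized). Your covariance bookkeeping is right: grafting changes only the triple $(\sigma_{vn_1},\sigma_{n_1n_2},\sigma_{vn_2})$ from $(w_1,w_2,w_1w_2)$ to $(w_1,w_1w_2,w_2)$; the pair $(n_1,v)$ has the same feasible $z$-set and the same values in both trees, with the $z=n_2$ entry equal to $w_1^2(1-w_2^2)/(1-w_1^2w_2^2)$ either way; the pairs $(v,u)$ with $u\in C$ lose value in $T_2$ only because Eve's feasible set gains $n_2$; and the decisive matching of $T_2$'s pair $(n_2,v)$ against $T_1$'s pair $(n_1,n_2)$, both pinned to $w_2^2(1-w_1^2)/(1-w_1^2w_2^2)$, is exactly where the security loss of grafting is located. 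Since every adjacent pair of $T_2$ is dominated by an adjacent pair of $T_1$, the maxima compare as claimed.

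The one point you flagged does need to be closed, but it closes in one line. Your equality $S(T,W)=\max_{\{a,b\}\in E}\min_{z\in (N(a)\cup N(b))\setminus\{a,b\}}\rho^2_{ab|z}$ requires that for \emph{every} adjacent pair, not merely the maximin-optimal one, Eve's unrestricted best response is a neighbor of $a$ or $b$; otherwise the restricted minimum of a matched $T_1$ pair (e.g.\ $(n_1,n_2)$, whose only neighbor choice is $z=v$) could strictly exceed its true unrestricted minimum, and the final comparison of maxima would not follow. This stronger per-pair fact is immediate from \eqref{eq:partial2}: for fixed $c=\sigma^2_{ab}\in(0,1)$ the map $s\mapsto c(1-s)/(1-cs)$ is strictly decreasing on $[0,1)$, and because all edge weights lie in $(-1,1)$, covariances decay multiplicatively along paths, so $\sigma^2_{bz}$ (resp.\ $\sigma^2_{az}$) is maximized, over all $z$ on $b$'s (resp.\ $a$'s) side, by a neighbor of $b$ (resp.\ $a$). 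In particular $z=v$ really is Eve's global best response to $(n_1,n_2)$ in $T_1$, not just her best neighboring one. With that monotonicity-plus-decay remark inserted, your case analysis is exhaustive and the conclusion $S(T_1,W)\geq S(T_2,W')$ stands.
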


As we can see from Lemma \ref{lem:lemwcnc}, for any given tree structure
with edge-weights chosen from the corresponding entries of the covariance
matrix $\Sigma$, the grafting operation always decreases the maximin
value of the resulting tree.  In fact, by grafting the edge $e$ we are
essentially adding another neighbor to the node $n_2$. This in turn,
gives more options to eavesdropper to choose the best location to attack,
resulting in smaller maximin values.  As a result, grafting operation
generates a certain ordering of trees, in which the corresponding
topologies are ordered with regard to their respective maximin values. In
the following, we formally define the tree ordering using the results
obtained in Lemma \ref{lem:lemwcnc},

\begin{defn} \label{defn:order}
Consider the trees $T_1=(V,E,W)$ and $T_2=(V,E',W')$, where $T_2$
is obtained from $T_1$ using grafting operation.  We know from Lemma
\ref{lem:lemwcnc} that $S(T_1,W)\geq S(T_2,W')$. In this setting, we
write $T_1\succeq T_2$, where the binary relation $``\succeq"$ shows
the ordering of these trees, i.e., $T_1$ is more favorable than $T_2$.
\end{defn}

As we will see shortly, the ordering, which is defined in 
Definition \ref{defn:order} gives rise to developing
an interesting concept: we define several \textit{classes} for all Gaussian trees with
the same order. Essentially, each class is a particular poset of distinct
Gaussian trees.

\subsection{The general cut and paste operation}

We proved that for Gaussian trees with $n\in\{4,5\}$ nodes the linear topology always has largest maximin value, so it is the most secure structure \cite{wcnc}. However, in this paper we prove that for $n\geq 6$ this is not the case.
We have the following results,

We have the following result,

\begin{lem} \label{lem:covariance}
\emph{
Consider any Gaussian tree $T=(V,E,W)$, with order $|V|=n$. We denote $|\Sigma_T|$ as the determinant of covariance matrix corresponding to $T$.
Considering the PG operation shown in Figure \ref{fig:generalcut}, which transforms the Gaussian tree $T_1$ into $T_2$, with $|\Sigma_{T_1}|=|\Sigma_{T_2}|$. Let us denote $\sigma_{n_1n_2}$ and $\sigma'_{v'n_2}$ as the covariance values between the pairs $(n_1,n_2)$ and $(v',n_2)$ in trees $T_1$ and $T_2$, respectively; then we have $\sigma^2_{n_1n_2}/\sigma_{n_1n_1} = \sigma'^2_{v'n_2}/\sigma_{vv}$.
}
\end{lem}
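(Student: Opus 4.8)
The plan is to reduce the entropy (equivalently, determinant) constraint to a single scalar identity by exploiting the product structure of the covariance determinant of a Gaussian tree. The first step is to establish that, for a Gaussian tree $T=(V,E,W)$ with normalized diagonal $\sigma_{ii}=1$, the determinant factorizes over edges as
\begin{align}
|\Sigma_T| = \prod_{e_{ij}\in E}\left(1-w_{ij}^2\right),
\end{align}
which follows from the tree-structured entropy decomposition $H=\sum_{i}H(x_i)-\sum_{(i,j)\in E}I(x_i;x_j)$ together with the Gaussian identities $H(x_i)=\tfrac12\ln(2\pi e)$ and $I(x_i;x_j)=-\tfrac12\ln(1-w_{ij}^2)$, matched against $H=\tfrac12\ln((2\pi e)^n|\Sigma_T|)$. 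The crucial feature of this formula is that the determinant depends only on the multiset of edge weights and is \emph{independent of the tree topology}; this is what makes the PG operation tractable.

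Next I would pin down exactly what the PG operation changes. By construction the operation in Figure \ref{fig:generalcut} deletes the single edge $e_{n_1n_2}$ of $T_1$ and reconnects $n_2$ to the node $v'$, leaving every other edge of the tree---and its weight---untouched. Since the covariance between two adjacent vertices equals the weight of the connecting edge, we have $\sigma_{n_1n_2}=w_{n_1n_2}$ in $T_1$ and $\sigma'_{v'n_2}=w'_{v'n_2}$ in $T_2$. Moreover, because every diagonal entry is an empty product of edge weights along the trivial path $p_{ii}$, the normalization $\sigma_{ii}=1$ is preserved by PG, so in particular $\sigma_{n_1n_1}=\sigma_{vv}=1$; retaining these factors is precisely what yields the denominators in the stated identity, and they arise automatically if one carries the general determinant formula $|\Sigma_T|=\prod_i\sigma_{ii}\prod_{(i,j)\in E}(1-\sigma_{ij}^2/(\sigma_{ii}\sigma_{jj}))$ through the argument before specializing.

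The final step is to impose the hypothesis $|\Sigma_{T_1}|=|\Sigma_{T_2}|$. Writing both determinants with the product formula, all factors coming from the common edges coincide and cancel, leaving only the factor of the relocated edge on each side:
\begin{align}
1-\sigma_{n_1n_2}^2 = 1-\sigma'^2_{v'n_2},
\end{align}
hence $\sigma_{n_1n_2}^2=\sigma'^2_{v'n_2}$. Dividing through by the (unit) variances $\sigma_{n_1n_1}$ and $\sigma_{vv}$ gives the claimed relation $\sigma_{n_1n_2}^2/\sigma_{n_1n_1}=\sigma'^2_{v'n_2}/\sigma_{vv}$.

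I expect the main obstacle to be the rigorous justification that PG alters exactly one factor of the determinant---i.e. that relocating the edge leaves all other edge weights, and therefore all other factors $(1-w_e^2)$, unchanged. For the simple grafting of Definition \ref{defn:graft} this is immediate, but for the general cut-and-paste one must confirm that no other pairwise covariance entering the determinant is disturbed; this is guaranteed by the topology-independence of the product formula, so once that formula is in hand the remainder is a routine cancellation.
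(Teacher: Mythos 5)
Your proof is correct, and it reaches the same pivotal identity as the paper --- a factorization of $|\Sigma_T|$ into per-edge terms, followed by cancellation of the factors shared by $T_1$ and $T_2$ --- but you derive that factorization by a genuinely different route. The paper proves, by induction on $|V|$ using column operations on $\Sigma_T$ (exploiting $\sigma_{v_iv_n}=\sigma_{v_iv_{n-1}}\sigma_{v_{n-1}v_n}/\sigma_{v_{n-1}v_{n-1}}$ for a leaf $v_n$ attached to $v_{n-1}$), the general formula
\begin{align*}
|\Sigma_T| = \frac{\prod\limits_{(v_i,v_j)\in E}\bigl[\sigma_{v_iv_i}\sigma_{v_jv_j}-\sigma^2_{v_iv_j}\bigr]}{\prod\limits_{v_i\in V}\sigma_{v_iv_i}^{d_i-1}},
\end{align*}
then applies the inductive step to the leaf $n_2$ in both trees and cancels the common factor $|\Sigma_{\backslash n_2\backslash n_2}|$. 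You instead obtain the (equivalent) product formula from the Markov-tree factorization of the joint density and the resulting entropy decomposition $H=\sum_i H(x_i)-\sum_{(i,j)\in E}I(x_i;x_j)$, matched against $H=\tfrac12\ln((2\pi e)^n|\Sigma_T|)$. That is an information-theoretic derivation of what the paper establishes with linear algebra; it is shorter and makes the topology-independence of the determinant (under unit variances) transparent, but it silently invokes the standard tree factorization $p(x)=\prod_i p(x_i)\prod_{(i,j)\in E}\frac{p(x_i,x_j)}{p(x_i)p(x_j)}$, which the paper's self-contained matrix induction does not need. One caution: your main line assumes $\sigma_{ii}=1$, under which the lemma's denominators are vacuous, whereas the statement is written with general $\sigma_{n_1n_1}$ and $\sigma_{vv}$; so the version of your argument that should be promoted to the main line is the one you only mention in passing, based on $|\Sigma_T|=\prod_i\sigma_{ii}\prod_{(i,j)\in E}\bigl(1-\sigma_{ij}^2/(\sigma_{ii}\sigma_{jj})\bigr)$, which is algebraically identical to the paper's formula since $\prod_{(i,j)\in E}\sigma_{ii}\sigma_{jj}=\prod_i\sigma_{ii}^{d_i}$. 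With that formula, the variances and the common-edge factors cancel between $T_1$ and $T_2$, and the surviving factors give $\sigma^2_{n_1n_2}/\sigma_{n_1n_1}=\sigma'^2_{v'n_2}/\sigma_{v'v'}$ exactly as claimed.
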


\begin{proof}
See Appendix \ref{app:covariance}.
\end{proof}

\begin{prop} \label{prop:generalcut}
Consider the trees shown in Figure \ref{fig:generalcut}. Given a Gaussian
tree $T_1=(V,E,W)$, with leaf edge $e$, which is connected to $e'$ if
we cut $e$ and paste it to some vertex other than $v$ (unlike grafting),
say $v'$, we obtain the Gaussian tree $T_2=(V,E',W')$. Then, in general
$T_1\nsucceq T_2$. Hence, in general $T_1$ is not always more favorable
than $T_2$.  
\end{prop}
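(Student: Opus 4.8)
The statement is a negative one (``in general $T_1\nsucceq T_2$''), so the plan is to establish it by exhibiting a single explicit counterexample: a Gaussian tree $T_1$ and a PG move producing $T_2$ for which $S(T_2,W')>S(T_1,W)$, directly contradicting the ordering $T_1\succeq T_2$ of Definition~\ref{defn:order}.

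First I would make the objective $S(T,W)$ concrete. Recall from \cite{wcnc} that the maximin triplet always has $a,b$ adjacent and $z$ a neighbor of $a$ or $b$. Combining this with \eqref{eq:partial2} and observing that, for a fixed edge weight $w_{ab}$, the quantity $\rho^2_{ab|z}=\sigma_{ab}^2(1-\sigma_{bz}^2)/(1-\sigma_{ab}^2\sigma_{bz}^2)$ is strictly decreasing in the competing squared weight $\sigma_{bz}^2$, Eve's best response to a chosen edge is to attack through the incident neighbor of largest squared weight. Writing $m_{ab}$ for that largest competing squared weight, I record the working expression
\[
S(T,W)=\max_{(a,b)\in E}\frac{w_{ab}^2\,(1-m_{ab})}{1-w_{ab}^2\,m_{ab}},
\]
and note that this value is itself decreasing in $m_{ab}$. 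This monotonicity is the lever of the whole argument: relocating a leaf so as to strip a strong competitor from the neighborhood of the current best edge strictly raises its value.

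Then I would build the counterexample on $n=6$ nodes. Take $T_1$ with edge set $\{e_{12},e_{23},e_{34},e_{45},e_{46}\}$, in which $v=3$, $n_1=2$, $n_2=1$ form the handle $3-2-1$ demanded by Figure~\ref{fig:generalcut} and Lemma~\ref{lem:covariance}. I assign weights with $w_{23}^2>w_{12}^2>w_{34}^2$ and the two fork weights $w_{45}^2,w_{46}^2$ small (concretely $0.8,0.5,0.1,0.1,0.1$), chosen so that the leaf $n_2=1$ is the strongest competitor of the parent edge $e'=e_{23}$. Evaluating the displayed expression over all five edges then shows the global maximin of $T_1$ is attained at $e_{23}$ with $m_{23}=w_{12}^2$. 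Applying the PG operation, I cut $e_{12}$ and paste it --- with its weight preserved, as forced by Lemma~\ref{lem:covariance} --- to a weak leaf $v'\in\{5,6\}$ rather than to $v=3$. This turns $n_1=2$ into a leaf, so that the sole remaining competitor of $e_{23}$ is the small edge $e_{34}$, giving the strictly larger value $w_{23}^2(1-w_{34}^2)/(1-w_{23}^2 w_{34}^2)$. Since $S(T_2,W')$ is a maximum over edges it is at least this quantity, hence $S(T_2,W')>S(T_1,W)$. The move leaves the multiset of edge weights unchanged, so $|\Sigma_{T_1}|=|\Sigma_{T_2}|$ exactly as in the determinant-preserving regime of Lemma~\ref{lem:covariance}, the comparison is fair, and therefore $T_1\nsucceq T_2$.

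The only genuinely delicate step --- and the one I expect to require care --- is certifying that $e_{23}$ is the \emph{global} maximin of $T_1$ rather than merely a favorable local choice, and likewise tracking Eve's optimal neighbor correctly in $T_2$ after the relocation, since pasting $n_2$ could in principle spawn a new dominant edge near $v'$. The two monotonicity remarks above reduce these to short comparisons, and pasting onto a weak leaf keeps every newly formed edge strictly below the value at $e_{23}$; thus the lower bound $S(T_2,W')\ge w_{23}^2(1-w_{34}^2)/(1-w_{23}^2 w_{34}^2)$ already closes the argument.
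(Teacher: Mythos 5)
Your proposal is correct, and your counterexample checks out: with weights $(w_{23}^2,w_{12}^2,w_{34}^2,w_{45}^2,w_{46}^2)=(0.8,0.5,0.1,0.1,0.1)$ one gets $S(T_1,W)=\tfrac{0.8(1-0.5)}{1-0.4}=\tfrac{2}{3}$, attained at $e_{23}$ with Eve at the leaf $n_2$, while in $T_2$ the same pair already yields $\tfrac{0.8(1-0.1)}{1-0.08}\approx 0.78$, so $S(T_2,W')>S(T_1,W)$ and a single such weight assignment indeed refutes $T_1\succeq T_2$. The difference from the paper is one of packaging rather than of mechanism. The paper's proof (Appendix~\ref{app:generalcut}) is a symbolic two-case analysis: Case~\ref{case:1} assumes the maximin of $T_1$ is achieved by the triplet $(v,n_1,n_2)$ and asserts $S(T_2,W')\geq S(T_1,W)$ in general form; Case~\ref{case:2} assumes the maximin sits near the paste vertex $v'$ and derives the reverse inequality. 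Your example is exactly a concrete instance of Case~\ref{case:1} --- relocating the leaf strips the strongest competitor from the neighborhood of the parent edge $e'$ --- so the key insight coincides. What your version buys is verifiability and self-containment: the paper's Case~\ref{case:1} conclusion is stated without detail (``We can show that...''), whereas your explicit numbers make it checkable, and your observation that preserving the multiset of squared edge weights preserves $|\Sigma|$ correctly discharges the equal-entropy constraint via Lemma~\ref{lem:covariance}. What the paper's version buys is the stronger, two-sided conclusion: by exhibiting both cases it shows neither ordering holds universally, i.e., the trees are genuinely incomparable under the binary relation, which is the fact the paper leans on later to motivate posets instead of a total order; your single example establishes only $T_1\nsucceq T_2$, which is, however, all that Proposition~\ref{prop:generalcut} literally asserts.
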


\begin{proof}
See Appendix \ref{app:generalcut}.
\end{proof}

\begin{figure} [h!]
\centering
\includegraphics[scale=0.5]{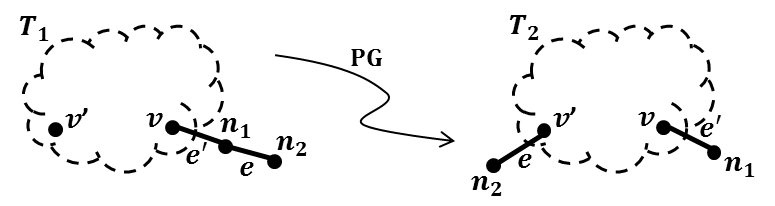} 
\caption{The figure showing the conditions in Proposition \ref{prop:generalcut}}\label{fig:generalcut} 
\end{figure}

From Proposition \ref{prop:generalcut} we can see that if two trees are not related through one or more grafting operations, then in general they cannot be ordered using our defined binary relation. In fact, without assigning a specific covariance matrix (hence the set of edge-weights) these structures cannot be consistently compared. Since not all tree topologies can be obtained from the linear structure using grafting, so the linear topology is not always the most secure structure. This result motivates us to seek for certain topologies that cannot be compared with each other, and at the same time they cannot be improved further, using grafting operation. In particular, we form sets of tree structures, where each set contains a unique leader that is the most favorable topology among all other topologies in the same set. Other topologies in a poset might be comparable/incomparable with each other. By classifying the trees into certain sets we can further study both their topological and algebraic properties.

\subsection{Forming the posets of Gaussian trees}

Based on the obtained results in Proposition \ref{prop:generalcut} we
can form posets \cite{poset} of Gaussian trees. Each poset is formed
from its most favorable (MF) structure, $T_M=(V_M,E_M,W_M)$. In other
words, $T_M$ is the poset leader acting as the \textit{ancestor} to
all other Gaussian trees in a poset, i.e., all other Gaussian trees can
be obtained from $T_M$ using one or more grafting operations. Also, in
each poset given two trees $T_i$ and $T_j$, they are adjacent if $T_j$
can be obtained from $T_i$ by grafting one of its edges. In this case,
there is a directed edge from $T_i$ to $T_j$, i.e. $T_i \rightarrow T_j$.

\begin{lem} \label{lemma:LF}
In any poset with a given $T_M=(V_M,E_M,W_M)$ acting as a poset leader,
we can find a unique least favorable (LF) structure, $T_L=(V_L,E_L,W_L)$,
which acts as a \textit{descendant} to all other trees.
\end{lem}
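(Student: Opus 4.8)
The plan is to regard the grafting operation as a rewriting relation $\rightarrow$ on the finite collection of trees on the fixed vertex set $V$ (with $|V|=n$), and to prove that the set of trees reachable from $T_M$ under $\rightarrow$ possesses a unique \emph{normal form} --- a tree admitting no further grafting --- which will serve as the least favorable structure $T_L$. By Definition \ref{defn:order} and Lemma \ref{lem:lemwcnc}, $T_i \rightarrow T_j$ implies $T_i \succeq T_j$, so a common normal form reachable from every member of the poset is automatically a descendant of all of them in the required sense. The argument rests on two ingredients: (i) \emph{termination} --- every chain of grafts is finite; and (ii) \emph{local confluence} --- any two grafts applicable to the same tree can be reconciled. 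A standard confluence argument (Newman's lemma) then yields a unique normal form.

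For termination I would introduce the potential $\Phi(T)=\sum_{u\in V}\deg(u)^2$. In a single graft the leaf $n_2$ is detached from $n_1$ and re-attached to $v$, so $\deg(n_1)$ drops from $2$ to $1$ while $\deg(v)$ rises from $d$ to $d+1$, and all other degrees are unchanged; hence $\Phi$ changes by $(1-4)+((d+1)^2-d^2)=2(d-1)$. Because $n_1$ has degree exactly two and $n_2$ is a leaf, for $n\geq 4$ every remaining vertex must connect to the rest of the tree through $v$, forcing $d=\deg(v)\geq 2$ and thus $\Phi$ to strictly increase at each graft. Since $\Phi$ is bounded above (by the value $n(n-1)$ attained at the star), no infinite grafting chain exists: every tree reaches a terminal tree in finitely many steps, and $\rightarrow$ is acyclic.

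The heart of the proof is local confluence: whenever a tree $T$ admits two distinct grafts producing $T'$ and $T''$, I must exhibit a tree $U$ with $T'\rightarrow^{\ast} U$ and $T''\rightarrow^{\ast} U$. I would split into cases according to how the two active triples $\{v,n_1,n_2\}$ and $\{v^{\ast},n_1^{\ast},n_2^{\ast}\}$ intersect. When they are vertex-disjoint the two grafts act on independent local regions and plainly commute, so performing the other graft after the first gives a common $U$ in one additional step on each side. The delicate cases are the overlapping ones --- when the two triples share the attachment vertex $v$, or when the endpoint of one coincides with the internal node of the other --- where performing one graft may alter the degree that licensed the second. Here I would verify directly, configuration by configuration, that the two results still reduce to a common tree (as already happens for the path $P_6$, whose two grafts both converge on the double star). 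I expect this overlap analysis to be the main obstacle, since it requires checking that grafting never creates an irreconcilable divergence.

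Granting termination and local confluence, Newman's lemma makes $\rightarrow$ confluent, so each tree has a \emph{unique} normal form, and the normal form is invariant along $\rightarrow$. Consequently every tree $S$ in the poset, being reachable from $T_M$, satisfies $\mathrm{nf}(S)=\mathrm{nf}(T_M)=:T_L$, and since $S\rightarrow^{\ast}\mathrm{nf}(S)=T_L$, the tree $T_L$ is a descendant of every member of the poset; it is therefore the least favorable structure, and it lies in the poset because $T_M\rightarrow^{\ast}T_L$. Uniqueness is immediate: any two least favorable structures would each be a descendant of the other, producing a grafting cycle and contradicting the strict monotonicity of $\Phi$ established above. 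Hence $T_L$ exists and is unique.
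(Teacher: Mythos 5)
Your framework (termination, local confluence, Newman's lemma) is sound, and it is in fact a rigorization of what the paper itself does: the paper's entire proof is the assertion that one may graft the special leaf edges ``with any order'' and ``obviously'' end up with the same structure --- i.e., exactly the confluence claim, asserted without argument. Your termination step via $\Phi(T)=\sum_{u}\deg(u)^2$ is complete and correct (indeed $\deg(v)\geq 2$ whenever $n\geq 4$, so each graft strictly increases $\Phi$, which also gives acyclicity). The genuine gap is that you never carry out the local-confluence analysis: you explicitly defer it (``I would verify directly, configuration by configuration'') and name it as the main obstacle. Since that case analysis \emph{is} the content of the lemma --- everything else is bookkeeping --- the argument as written is incomplete, though admittedly no more incomplete than the paper's own.

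There is also a concrete flaw in your setup that the deferred analysis would have exposed: on \emph{labeled} trees over a fixed vertex set $V$, which is where you place the rewriting relation, unique normal forms are false. Take $P_4$ with vertices $a-b-c-d$: grafting the leaf $a$ gives the star centered at $c$, grafting the leaf $d$ gives the star centered at $b$; both are terminal (no leaf is adjacent to a degree-$2$ vertex) and they are distinct labeled trees. So $\rightarrow$ is not confluent on labeled trees, and the lemma must be read, as the paper intends, on isomorphism classes of topologies. Once you pass to isomorphism classes, your plan does close, and the overlap analysis is short: if two distinct applicable triples $\{v,n_1,n_2\}$ and $\{v',n'_1,n'_2\}$ share any vertex other than $v=v'$, the degree constraints $\deg(n_1)=\deg(n'_1)=2$ and $\deg(n_2)=\deg(n'_2)=1$ force the entire tree to be $P_3$ or $P_4$ (handled directly, up to isomorphism); otherwise the triples are vertex-disjoint or share only the attachment vertex, and in either case the first graft leaves the second triple intact, so the two grafts commute exactly and a common successor is reached in one further step on each side. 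With that verified, Newman's lemma applies and your existence and uniqueness conclusions follow as you state them.
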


\begin{proof}
See Appendix \ref{app:LF}
\end{proof}

Hence, we observe that our defined posets are certain class of posets,
which have a unique MF ans LF structures. Also, from the results in
Lemma \ref{lem:lemwcnc} we know that $T_M$ has the most secure structure,
while $T_L$ has the least secure structure in each poset.
As an example, Figure \ref{fig:posets} shows all three posets of Gaussian
trees on $7$ nodes.  It can be seen that when there is a directed path
between two topologies, then they are comparable using our defined
binary relation.  Note that in this figure, posets $1$ and $3$ are
the special cases where posets are basically formed as fully ordered
sets, hence any tree structure in each of these poset can be compared
to other trees in the same poset.  On the other hand, in poset $2$,
the two middle structures cannot be compared using the rules given in
Lemma \ref{lem:lemwcnc}.  Also, the MF and LF structures are placed at
the top and bottom of each poset, respectively. 

\begin{figure}[h!]
\centering
\includegraphics[scale=0.40]{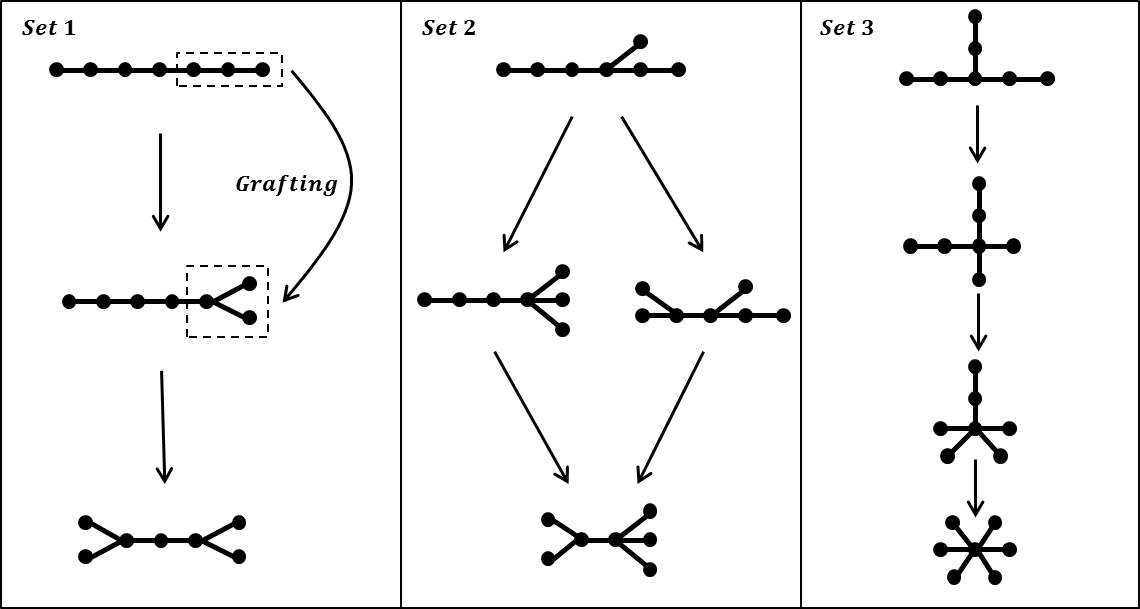} 
\caption{All the possible posets for Gaussian trees with $n=7$ nodes\label{fig:posets}} 
\end{figure}

The MF topologies are the most secure trees in each poset; also as we
observed, poset leaders can characterize all other structures in a poset:
the poset leaders can fully describe the poset structure.
Hence, finding such structures is of
huge importance. However, there should be a method to systematically
obtain these topologies. Thus, in section \ref{algebra} we propose an
efficient way, which can enumerate all these structures systematically.

\subsection{Forming the super-graph for each poset}
Figure \ref{fig:posets} gives us an intuition in order to construct
a directed super-graph containing Gaussian trees. In particular, each
poset can be converted into a directed super-graph $G=(V_s,E_s)$, where
$V_s$ is the set of trees in a poset acting as vertices, and $E_s$ is
the set of directed edges between the two nodes that can be related
using grafting.  Using this super-graph, we can easily identify the
comparable tree structures: If there is a directed path between two
structures, then they are comparable. Hence, we can conclude that both
MF and LF structures can be compared with any other tree in a poset.
Also, observe that poset leader fully characterizes the structure of
its super-graph. In particular, the number of those leaf edges (in 
the poset leader structure) that are adjacent to a particular node with degree 
two, specifies the \textit{length} (number of
consecutive grafting operations plus $1$) of the super-graph. Moreover,
the structure of those special edges specifies the \textit{width} of the
super-graph. In particular, consider the following:
in Figure \ref{fig:posets} we can see that the
poset $3$ have three special edges, hence the super-graph has length
$4$. Also, since these special edges are fully \textit{symmetric} with respect
to each other (grafting either of those edges, results in a same tree),
so the poset $3$ becomes fully ordered. On the other
hand, in poset $2$ because of the two \textit{asymmetric} branches we obtain two
different topologies in the next level. In general, if those special
branches become more symmetric, the poset tends to become fully ordered.

Although, converting each poset to its corresponding super-graph
simplifies the comparison between tree topologies in a set, but as it
can be observed, for larger trees identifying these special branches and
ordering trees by grafting operation becomes more challenging. Hence, in
the following we aim to study the tree structures and their associated
posets in a more abstract and general ways.

\section{Algebraic properties of Gaussian Trees} \label{algebra}

\subsection{Tutte-Like Polynomials for Gaussian Trees in Posets}

In this section, in order to model the  Gaussian trees and the
corresponding posets more systematically, we study the algebraic
properties of these models. As we may see in the following, these
properties will further help us characterize those special leaf nodes with
no leaf siblings, and thus allow us to evaluate the security robustness
of any tree structure within a poset. To achieve this goal, for each
tree,  we associate a two-variable \textit{Tutte-like} polynomial
defined in \cite{tutte}, where the authors modify the definition of the
Tutte polynomial to obtain a new invariant for both rooted and unrooted
trees. Also, they proved that this polynomial uniquely determines rooted
trees. For unrooted trees however, it is shown in \cite{nonisomo} that
certain classes of caterpillars have the same polynomials assigned to
them. However, interestingly, we prove that in each poset, 
in many cases the trees have unique polynomials.

Let $R(T)$ denote the collection of all subtrees of $T$, and $L_E(S)$
denote the leaf edges in the subtree $S$, i.e., the edges that are
connected to leaf nodes then \cite{tutte},
\begin{align} \label{eq:tutte}
f_E(T;t,z)=\sum_{S\subseteq R(T)} t^{|E(S)|}(z+1)^{|E(S)|-|L_E(S)|}
\end{align}

\noindent where $|E(S)|$ is the total number of edges in the
subtree $S$. Basically, this polynomial is a generating function
that encodes the number of subtrees with a given internal and leaf
edges \cite{nonisomo}. We next show that such polynomials can help us
systematically generate trees in a poset from the poset leader. The
proof can be found in Appendix \ref{app:recursive}.
\begin{lem} \label{lem:recursive}
Suppose there is a directed path from the tree $T_n$ to $T_{n-m}$ in a poset, i.e., $T_{n-m}$ can be obtained from $T_{n}$ through $m$ grafting operation.
Then, their associated polynomials have the following recursive formula,
\begin{align} \label{eq:recursive}
\notag f(T_n;&t,z) =\\ &f(T_{n-m};t,z)+ t(1-tz)[m-\sum_{k=1}^m g_{n-k}(t,z)]
\end{align}
where, $g_{n-k}(t,z)$ is the polynomial associated to the rooted tree obtained from the tree $T_{n-k}$, after deleting the edges $e$ and its neighbor edge $e'$ (e.g., see $e$ and $e'$ shown in Figure \ref{fig:grafting} for the tree $T_1$), in a given step $k$, and putting their common node as a root (e.g., the node $v$ in Figure \ref{fig:grafting}). Note that in \eqref{eq:recursive}, $T_1$ is the LF topology.
\end{lem}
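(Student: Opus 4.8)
The plan is to reduce the general $m$-step statement to the single-grafting case ($m=1$) and then telescope. For the base case I would fix the local picture of Figure \ref{fig:grafting}: write $G$ for the common \emph{core} obtained by deleting the edges $e,e'$ together with the nodes $n_1,n_2$, and root it at their common neighbor $v$; let $g(t,z)$ denote the rooted Tutte-like polynomial of $G$, with the convention that the root $v$ is never counted as a leaf and that the single-vertex subtree $\{v\}$ is included and contributes $1$. Since grafting is purely local, the core $G$ --- and hence $g$ --- is identical for $T_1$ and $T_2$; this is exactly the object called $g_{n-k}$ in the statement.

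The heart of the argument is a direct comparison of the two subtree-sums in \eqref{eq:tutte}. I would partition $R(T_1)$ and $R(T_2)$ according to how each subtree $S$ meets the local region $\{n_1,n_2,e,e'\}$. Subtrees containing neither $e$ nor $e'$ live entirely in $G$ (or are the isolated vertices $n_1,n_2$) and contribute identically to both trees, so they cancel in the difference. The remaining subtrees fall into three configurations: contain $e'$ but not $e$; contain $e$ but not $e'$; contain both. In each case I would write $S=S_0\cup(\text{local edges})$ with $S_0\subseteq G$ a subtree containing $v$, and track $|E(S)|$ and $|L_E(S)|$ in terms of $|E(S_0)|$ and the leaf edges of $S_0$. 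Adding a pendant edge at $v$ raises the degree of $v$, so the only delicate bookkeeping is the boundary case $S_0=\{v\}$: in $T_1$ the path $v\!-\!n_1\!-\!n_2$ then makes $v$ itself a leaf, so the edge $e'$ switches from internal to a leaf edge, which is precisely the term that breaks the otherwise clean factorization. Collecting the configuration sums, the $T_2$ contributions total $2tg+t^2g$, while the $T_1$ contributions total $t+tg+t^2(z+1)g-t^2z$, and after simplification the difference collapses to the compact form
\begin{align}
f(T_1;t,z)-f(T_2;t,z)=t(1-tz)\bigl(1-g(t,z)\bigr).
\end{align}
This is \eqref{eq:recursive} for $m=1$ (with $T_n=T_1$, $T_{n-1}=T_2$, $g=g_{n-1}$); I would sanity-check it on the smallest non-isomorphic pair, the path $P_4$ versus the star $K_{1,3}$, where both sides equal $-t^2+t^3z$.

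For general $m$ I would apply the base-case identity to each consecutive pair along the directed path $T_n\to T_{n-1}\to\cdots\to T_{n-m}$, using at step $k$ the core polynomial $g_{n-k}$ produced by deleting the edge grafted at that step. Summing the $m$ one-step differences telescopes to
\begin{align}
f(T_n;t,z)-f(T_{n-m};t,z)=t(1-tz)\sum_{k=1}^{m}\bigl(1-g_{n-k}(t,z)\bigr)=t(1-tz)\Bigl[m-\sum_{k=1}^{m}g_{n-k}(t,z)\Bigr],
\end{align}
which is exactly \eqref{eq:recursive}. I expect the main obstacle to lie entirely in the base case: handling the $S_0=\{v\}$ boundary terms consistently and pinning down the rooted-polynomial convention for $g$ so that the ``$1$'' (equivalently the ``$m$'') in the formula corresponds to the single-root subtree. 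Once the factor $t(1-tz)$ is seen to emerge from combining the $|E(S)|$ and $|E(S)|-|L_E(S)|$ exponents across the three configurations, the telescoping step is routine.
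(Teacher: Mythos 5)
Your proof is correct, but it takes a genuinely different route from the paper's. The paper's proof (Appendix~\ref{app:recursive}) does no combinatorial bookkeeping of its own: it invokes Propositions 4, 5, 15 and 18 of \cite{tutte} to obtain the two deletion identities $f(T_n;t,z)=h_{n-1}+t[t(z+1)g_{n-1}+(1-tz)]$ and $f(T_{n-1};t,z)=h_{n-1}+t(t+1)g_{n-1}$, where $h_{n-1}$ is the polynomial of the unrooted tree left after deleting the grafted edge $e$, and then simply subtracts them to get the one-step recursion $f(T_n;t,z)=f(T_{n-1};t,z)+t(1-tz)[1-g_{n-1}(t,z)]$, which it iterates $m$ times. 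You instead prove that same one-step recursion from first principles, by partitioning the subtree sums in \eqref{eq:tutte} according to how a subtree meets $\{e,e'\}$. Your configuration totals are right --- I checked that in $T_2$ the three nontrivial configurations give $tg+tg+t^2g$, that in $T_1$ they give $tg+t+[t^2(z+1)g-t^2z]$ (the $-t^2z$ correction being exactly your $S_0=\{v\}$ boundary case, where $e'$ reverts to a leaf edge), and that the difference factors as $t(1-tz)\bigl(1-g\bigr)$; your sanity check on $P_4$ versus $K_{1,3}$ is also correct, with $g=1+t$ and difference $-t^2+t^3z$. The telescoping step is the same in both arguments. What your route buys is self-containedness and transparency: it does not outsource the deletion identities to \cite{tutte}, it shows exactly where the factor $t(1-tz)$ comes from, and it forces the rooted-polynomial conventions (root never counted as a leaf, single-vertex subtree contributing $1$) to be stated explicitly --- conventions the paper leaves implicit and on which the ``$m$'' term in \eqref{eq:recursive} silently depends. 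What the paper's route buys is brevity, at the price of asking the reader to verify that the four cited propositions compose as claimed under matching conventions.
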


Using the recursive equation derived in \eqref{eq:recursive}, we then have
the following corollary, whose proof is in Appendix \ref{app:unique}.
\begin{cor} \label{cor:unique}
In a poset, certain tree structures are uniquely determined by Tutte-like
polynomial. In particular, if one of the following cases happen then two
polynomials are distinct: (1) If there exists a directed path between
two trees; (2) If both trees have the same parent tree; (3) If the two
structures lie at different levels.
\end{cor}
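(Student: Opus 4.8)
The plan is to prove Corollary \ref{cor:unique} by leveraging the recursive formula \eqref{eq:recursive} from Lemma \ref{lem:recursive} to show that, in each of the three enumerated cases, the two Tutte-like polynomials must differ. The central observation I would exploit is that the correction term $t(1-tz)[m - \sum_{k=1}^{m} g_{n-k}(t,z)]$ carries information about both the number of grafting operations $m$ and the structure of the rooted subtrees obtained along the grafting path. Since $g_{n-k}(t,z)$ is itself a polynomial with nonnegative integer coefficients and constant term equal to $1$ (the empty subtree), the bracketed quantity is genuinely nonzero as a polynomial whenever $m \geq 1$, and the factor $t(1-tz)$ is a fixed nonzero polynomial. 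I would begin by establishing this nonvanishing fact carefully, since it underlies all three cases.

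For case (1), suppose there is a directed path from $T_n$ to $T_{n-m}$ with $m \geq 1$. Then I would apply \eqref{eq:recursive} directly to write $f(T_n;t,z) - f(T_{n-m};t,z) = t(1-tz)[m - \sum_{k=1}^{m} g_{n-k}(t,z)]$, and argue that the right-hand side cannot be the zero polynomial. The key step is to examine the lowest-degree terms in $t$: each $g_{n-k}(t,z)$ has constant term $1$, so $m - \sum_{k=1}^m g_{n-k}$ has constant term $0$, but the next-order terms in $t$ (coming from the single-edge subtrees of each rooted tree) will not all cancel, because the rooted trees at distinct grafting steps have different numbers of edges incident to the root. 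Multiplying by $t(1-tz)$ preserves this nonvanishing, so the two polynomials differ.

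For cases (2) and (3), I would reduce to case (1) by routing through the common structure. If two trees $T$ and $T'$ share the same parent tree $P$ (case 2), then each is obtained from $P$ by a single grafting operation along a distinct special branch, so \eqref{eq:recursive} with $m=1$ gives $f(T;t,z) = f(P;t,z) + t(1-tz)[1 - g(t,z)]$ and similarly for $T'$ with a different rooted tree $g'(t,z)$; the difference $f(T;t,z) - f(T';t,z) = t(1-tz)[g'(t,z) - g(t,z)]$ is nonzero precisely when the two rooted subtrees $g$ and $g'$ are non-isomorphic, which holds because the branches are asymmetric (otherwise $T$ and $T'$ would coincide). For case (3), trees at different levels are separated by a different total number of grafting operations from the poset leader, so their polynomials differ from the leader's by correction terms with distinct values of $m$; comparing these via \eqref{eq:recursive} again yields a nonzero difference governed by the $m$-dependence of the bracket.

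The main obstacle I anticipate is the nonvanishing argument for the bracketed term in full generality, particularly in case (3): while distinct values of $m$ suggest the corrections differ, one must rule out an accidental cancellation in which $\sum_{k=1}^{m} g_{n-k}$ and $\sum_{k=1}^{m'} g_{n'-k}$ conspire to produce equal net polynomials despite $m \neq m'$. Handling this requires a careful degree or coefficient analysis of the $g$ polynomials---for instance, tracking the coefficient of the highest power of $t$ (corresponding to spanning subtrees of each rooted piece) to force a discrepancy. I would devote the bulk of the rigorous work to pinning down which monomial in $(t,z)$ is guaranteed to survive the cancellation, and I expect the structural fact that grafting strictly changes the leaf-edge count of the relevant subtrees to be the lever that makes this work.
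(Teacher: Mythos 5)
Your overall strategy---running all three cases through the recursion \eqref{eq:recursive} and analyzing the bracket $t(1-tz)\bigl[m-\sum_{k} g_{n-k}\bigr]$---is exactly the paper's approach, and your cases (1) and (2) essentially reproduce the paper's arguments. Two small remarks there. In case (1) your appeal to ``different numbers of edges incident to the root'' is unnecessary (and not in general true): no cancellation among the higher-order terms is possible at all, since each $g_{n-k}$ has constant term $1$ and all remaining coefficients nonnegative, so $m-\sum_{k=1}^m g_{n-k}$ is the negative of a nonnegative-coefficient polynomial and vanishes only if every rooted piece is a single vertex, which cannot happen for the trees under consideration. In case (2), the step ``nonzero precisely when the rooted subtrees are non-isomorphic'' is not free: it is the completeness of the Tutte-like invariant for \emph{rooted} trees, proved in \cite{tutte}, which the paper cites explicitly; you should too.

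The genuine gap is case (3), which you explicitly leave open (``I would devote the bulk of the rigorous work to pinning down which monomial survives''), and the lever you propose---that grafting changes the leaf-edge count of the relevant subtrees---is not the one that closes it. The paper's resolution: compare both trees against a common ancestor, suppose the polynomials are equal, and deduce $\sum_{k=1}^m g^L_{n-k} - \sum_{k=1}^l g^R_{n-k} = m-l$. Substituting $y_i = g_i - 1$ converts this to $\sum_{k=1}^m y^L_{n-k} = \sum_{k=1}^l y^R_{n-k}$. Each $y$-polynomial has a \emph{unique} monomial of top degree $t^{|E|-2}$: it comes from the full rooted subtree, which has exactly $|E|-2$ edges (grafting preserves $|E|$, and two edges are deleted), and its coefficient is $(1+z)^{I}$ with $I$ the internal-edge count of that rooted piece, with coefficient $1$ in front. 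Equating the coefficients of $t^{|E|-2}$ on both sides gives $\sum_{k=1}^m (1+z)^{L_{n-k}} = \sum_{k=1}^l (1+z)^{R_{n-k}}$, and evaluating at $z=0$ simply counts the summands, forcing $m=l$, a contradiction. So the discrepancy is forced not by leaf-edge counts but by the fact that each grafting step contributes exactly one unit-coefficient top-degree term, so the number of grafting steps itself can be read off the top coefficient; without this (or an equivalent) counting argument, your case (3) remains unproved.
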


Hence, by Corollary \ref{cor:unique} we see that although Tutte-like
polynomial is not graph invariant in general, but in many cases the
polynomials associated to trees in a same poset are distinct.  As an
example, consider the poset $2$ shown in Figure \ref{fig:posets}. Since
all trees satisfy at least one of the conditions in Corollary
\ref{cor:unique}, all of their associated polynomials are thus 
be distinct. Following  
\eqref{eq:tutte}, we have 
\begin{align} \label{eq:tuttepolyposet2}
f(T_M;t,y)=t^6y^3&+t^5(y^3+2y^2)+t^4(3y^2+2y)\notag\\&+t^3(5y+1)+6t^2+6t+1\notag\\
f(T_l;t,y)=t^6y^2&+t^5(3y^2+y)+t^4(3y^2+3y+1)\notag\\&+t^3(4y+4)+8t^2+6t+1\notag\\
f(T_r;t,y)=t^6y^2&+t^5(3y^2+y)+t^4(2y^2+5y)\notag\\&+t^3(6y+2)+7t^2+6t+1\notag\\
f(T_L;t,y)=t^6y&+5t^5y+t^4(8y+1)+t^3(6y+5)\notag\\&+9t^2+6t+1
\end{align} 

\noindent where $T_M$ and $T_L$ are the MF and LF topologies in poset $2$,
respectively. Also, $T_l$ and $T_r$ are the left and right topologies,
respectively that located in the middle of poset $2$. For the simplicity
of polynomials we replaced $z+1$ with $y$. As we expect, all the computed
polynomials in \eqref{eq:tuttepolyposet2} are distinct.

The Tutte-like polynomial can be used to evaluate certain topological
properties of trees.   In the following lemma, whose proof is in
Appendix \ref{app:specialbranches}, we propose an interesting result: 
the Tutte-like polynomial can enable us to extract the exact number 
of those special leaf edges from this polynomial. Hence, using 
this result we estimate the security robustness of a tree structure 
by computing its distance from LF structure.

\begin{lem} \label{lem:specialbranches}
Given the polynomial $f(T;t,z)$ associated with $T$, the second highest degree term has the form $t^{|E|-1}(\alpha(1+z)^{|I|-1}+\beta(1+z)^{|I|})$. The coefficient $\alpha$ shows the number of leaf edges with no leaf siblings.
\end{lem}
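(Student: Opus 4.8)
\emph{Proof plan.} The plan is to read off the two top-degree (in the variable $t$) coefficients of $f_E(T;t,z)$ directly from \eqref{eq:tutte}, by classifying the subtrees $S\subseteq R(T)$ according to their edge count $|E(S)|$, since the exponent of $t$ in each summand is exactly $|E(S)|$. The unique subtree with $|E(S)|=|E|$ is $S=T$ itself, and for the whole tree $|E(T)|-|L_E(T)|=|I|$, so the leading term is $t^{|E|}(1+z)^{|I|}$. Hence the second highest degree term is the $t^{|E|-1}$ term, which is assembled from all subtrees having exactly $|E|-1$ edges.

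First I would characterize those subtrees. Since a subtree $S$ is connected and itself a tree, $|E(S)|=|E|-1$ forces $|V(S)|=|E|=|V(T)|-1$, so $S$ is obtained from $T$ by deleting a single vertex, and connectivity of $S$ forces that vertex to be a leaf. Thus the subtrees counted by $t^{|E|-1}$ are in bijection with the leaves of $T$: for each leaf $\ell$ with unique neighbor $v$, write $S_\ell=T\setminus\{\ell\}$. There are exactly $|L_E(T)|=|E|-|I|$ of them, and each contributes $t^{|E|-1}(1+z)^{(|E|-1)-|L_E(S_\ell)|}$, so the whole computation reduces to evaluating $|L_E(S_\ell)|$.

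The heart of the argument is a short case analysis on $\deg_T(v)$, tracking how deleting the pendant edge $e_\ell=(v,\ell)$ changes the set of leaf edges. If $\deg_T(v)\ge 3$, then $v$ stays internal in $S_\ell$, no new leaf is created, and we simply lose the leaf edge $e_\ell$; hence $|L_E(S_\ell)|=|L_E(T)|-1$ and the exponent of $(1+z)$ equals $(|E|-1)-(|L_E(T)|-1)=|I|$. If $\deg_T(v)=2$, then $v$ drops to degree one and becomes a new leaf, so its other incident edge $e'$ (which is internal in $T$, because $v$'s remaining neighbor is not itself a leaf once $T$ has at least four vertices) turns into a leaf edge; the gain of $e'$ cancels the loss of $e_\ell$, giving $|L_E(S_\ell)|=|L_E(T)|$ and exponent $(|E|-1)-|L_E(T)|=|I|-1$. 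Collecting terms, the coefficient of $t^{|E|-1}(1+z)^{|I|-1}$ is $\alpha=\#\{\text{leaf edges whose non-leaf endpoint has degree two}\}$ and the coefficient of $t^{|E|-1}(1+z)^{|I|}$ is $\beta=\#\{\text{leaf edges whose non-leaf endpoint has degree}\ge 3\}$, with $\alpha+\beta=|L_E(T)|$, exactly matching the claimed form. A leaf edge sits at a degree-two vertex precisely when that vertex carries no other pendant, i.e. the leaf has no leaf sibling, which is the stated reading of $\alpha$.

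The step I expect to need the most care — the main obstacle — is the leaf-edge bookkeeping in the degree-two case together with its degenerate boundaries: I must rule out the situation where $v$'s second neighbor is itself a leaf (which happens only for $T=P_3$ and would make $e'$ already a leaf edge, spoiling the cancellation), and I must note that the displayed form presupposes $|I|\ge 1$ so that $(1+z)^{|I|-1}$ has a nonnegative exponent. The few small and star-shaped exceptions (where $|I|=0$) are checked by hand, and there $\alpha=0$, so no negative power ever appears. Everything else is a direct substitution into \eqref{eq:tutte}.
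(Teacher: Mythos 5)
Your proposal is correct and takes essentially the same route as the paper's own proof: the paper likewise identifies the $t^{|E|-1}$ terms with the subtrees obtained by deleting a single leaf edge, and observes that the edges adjacent to the deleted one ``either become leaf or keep their previous states,'' which is exactly your degree-two versus degree-$\geq 3$ dichotomy; your write-up is simply more careful (the connectivity argument for why only leaf deletions occur, and the $P_3$ and star degeneracies, which the paper never mentions). One caveat worth recording: your closing claim that a leaf edge sits at a degree-two vertex ``precisely when'' the leaf has no leaf sibling fails in one direction --- a leaf attached to a degree-three vertex whose other neighbors are all internal has no leaf sibling yet contributes to $\beta$, not $\alpha$ --- but this imprecision is inherited from the lemma's own phrasing rather than being a defect of your argument, since the paper's worked example (the MF tree of poset $2$, where $\alpha=2$ even though all three of its leaf edges lack leaf siblings) shows that the intended meaning of $\alpha$ is exactly the degree-two count you actually establish.
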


\begin{cor} \label{cor:distance}
The coefficient $\alpha$ defined in Lemma \ref{lem:specialbranches} shows the distance between the tree $T$ and LF structure. Also, if $\alpha=0$ then $T$ is the LF structure. 
\end{cor}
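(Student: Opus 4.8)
The plan is to combine the identification of $\alpha$ furnished by Lemma \ref{lem:specialbranches} with the grafting structure that organizes each poset. By Lemma \ref{lem:specialbranches}, $\alpha$ equals the number of leaf edges of $T$ that have no leaf sibling; these are exactly the leaf edges whose near endpoint $n_1$ has degree two, i.e. exactly the edges on which a grafting operation (Definition \ref{defn:graft}) can be performed. I therefore read the \emph{distance} from $T$ to the LF structure $T_L$ as the number of grafting operations on the directed path from $T$ down to $T_L$ in the super-graph (well defined because the poset is graded into levels), and the goal is to prove that this number equals $\alpha$.

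First I would show that a single grafting step changes the special-leaf-edge count by exactly one. Locally, grafting the special edge $e=\{n_1,n_2\}$ onto the node $v$ turns $e$ into the leaf edge $\{v,n_2\}$ and simultaneously turns the former internal parent edge $e'=\{n_1,v\}$ into the leaf edge $\{v,n_1\}$; the two resulting leaf edges are now siblings at $v$, so neither is special, whereas one verifies that no other edge changes its special/non-special status. Hence the special edge $e$ is removed and no new one is created, so the count drops by one. I would corroborate this algebraically by specializing Lemma \ref{lem:recursive} to $m=1$, writing $f(T_n;t,z)=f(T_{n-1};t,z)+t(1-tz)\bigl[1-g_{n-1}(t,z)\bigr]$, and extracting the coefficient of $t^{|E|-1}$: tracking the powers of $(1+z)$ shows that the $\alpha$-part of the second-highest term increases by exactly one in passing from $T_{n-1}$ up to $T_n$, which is the same statement read in the opposite direction.

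Iterating this down the chain from $T$ to $T_L$ gives $\alpha(T)=\alpha(T_L)+d$, where $d$ is the distance, so it remains to evaluate $\alpha(T_L)$. A tree admits no grafting precisely when no leaf edge has a degree-two near endpoint, i.e. when it has no special leaf edge, which by Lemma \ref{lem:specialbranches} is equivalent to $\alpha=0$; such a tree is terminal in its poset, and by the uniqueness of the least favorable structure (Lemma \ref{lemma:LF}) it must coincide with $T_L$. Thus $\alpha(T_L)=0$, giving $d=\alpha(T)$ and, in particular, $\alpha=0 \iff T=T_L$.

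The main obstacle is the exactness of the per-step change: I must rule out the possibility that grafting $e$ onto $v$ also neutralizes a second special edge already incident to $v$ (for instance a pre-existing pendant leaf at $v$), which would make the count fall by more than one. I expect to dispose of this using the structural regularity of poset leaders established earlier — in a leader the graftable pendant paths hang from fully internal branch nodes, so the target $v$ of any grafting carries no competing special edge — and, as a safeguard, the coefficient extraction from Lemma \ref{lem:recursive}, which accounts for all edges incident to $n_1$ and $v$ automatically and so yields the exact change of one without separate case analysis.
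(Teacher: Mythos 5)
Your overall strategy --- identify $\alpha$ with the graftable leaf edges via Lemma \ref{lem:specialbranches}, argue that each grafting decreases this count by exactly one, and conclude with $\alpha(T_L)=0$ plus the uniqueness of the LF descendant from Lemma \ref{lemma:LF} --- is precisely the argument the paper leaves implicit (it states Corollary \ref{cor:distance} with no proof at all), and your closing step ($\alpha=0$ implies $T$ is terminal, hence $T=T_L$) is sound. The problem is the step you yourself flag as the main obstacle: the per-step exactness. Neither of your proposed fixes closes it. The ``structural regularity of poset leaders'' is of no use, because exactness must hold at \emph{every} tree along the chain from $T$ down to $T_L$, not just at the leader, and no lemma in the paper establishes such regularity even for leaders. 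And the ``safeguard'' coefficient extraction from Lemma \ref{lem:recursive} is only asserted, never carried out; it cannot be unconditionally correct, because the exactness claim itself is false in general.

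Concretely, let $T$ be the path on $4$ vertices, $x-v-n_1-n_2$. Both end edges are special, and indeed $f(T;t,z)=t^3(1+z)+2t^2+3t+1$, so Lemma \ref{lem:specialbranches} gives $\alpha=2$. Yet a single grafting of $\{n_1,n_2\}$ onto $v$ produces the star, which is the LF structure: the distance is $1$, not $2$. The reason is exactly the interference you worried about --- that grafting raises $\deg v$ from $2$ to $3$ and thereby destroys the second special edge $\{v,x\}$ in the same step. So any valid proof must invoke the size of the tree somewhere, and your sketch never does. The repair is to finish the local degree count you started: after grafting, degrees change only at $n_1$ (from $2$ to $1$) and at $v$ (from $d_v$ to $d_v+1$), so no new special edge can be created, and a second special edge can be destroyed only if it is a pendant edge at $v$ with $\deg v=2$; combined with $\deg n_1=2$ and $n_2,x$ being leaves, this forces the entire tree to be the $4$-vertex path. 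Hence for $|V|\geq 5$ every grafting decreases $\alpha$ by exactly one, your induction down the chain goes through, and in particular every maximal grafting sequence from $T$ has the same length $\alpha(T)$, so ``distance'' is well defined. You should state the conclusion with this restriction $|V|\geq 5$ (all of the paper's examples have $n=7$); as written, the corollary literally fails at $n=4$.
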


\begin{ex}
Consider the tree topologies in poset $2$ of Figure \ref{fig:posets}, and their associated polynomials that is computed in \eqref{eq:tuttepolyposet2}. The MF tree $T_M$ has two leaf edges with no siblings, hence in its corresponding polynomial, the second highest degree term has the form $t^5y^3+2t^5y^2$. Hence, $\alpha=2$. On the other hand, the LF tree $T_L$ has no such leaf edges. From \eqref{eq:tuttepolyposet2} we can see the second highest degree term for $f(T_L;t,y)$ is $5t^5y$, hence $\alpha=0$.
\end{ex}

The results obtained in Lemma \ref{lem:specialbranches} and Corollary \ref{cor:distance} show the strong correlation between the Tutte-like polynomial and security robustness of Gaussian tree. This information is very helpful in order to compare the security of a tree in a poset. In particular, being closer to LF structure, hence having smaller values for $\alpha$ (comparing to others in the same poset), makes the Gaussian tree less favorable comparing to other structures in a poset.

\subsection{Enumerating Poset Leaders: Restricted Integer Partition Approach}

In the previous sections, we studied certain properties of tree
topologies in the same poset.  In this section, we find a systematic
way to generate different poset leaders, which is further related to
\textit{restricted} integer partition problems.  The method we propose
is essentially determined by the property of each MF in that the leaf
edges have no leaf siblings in all MF structures.  The following example
will demonstrate new ways to quickly enumerate these MF models.

Consider the MF topology in poset $3$ shown in Figure
\ref{fig:posets}. There are three branches coming out of the central
node, which has degree $3$. Each branch has $2$ nodes, hence we assign
the string $(2+2+2)$ to this topology. Each number (here all numbers
are $2$) shows the length of their corresponding branches. Note that
we do not count the central node, which we name it the \textit{anchor}
node. Similarly consider the MF topology in poset $2$, which is shown in
Figure \ref{fig:posets}. Here, the anchor node is the vertex with degree
$3$, hence we assign the string $(1+2+3)$ to this topology. Lastly,
consider the MF topology in poset $1$. Since, all the internal nodes,
can be anchor nodes, hence we can assign multiple (equivalent) strings to
this topology, i.e., $(1+5)$, $(2+4)$, and $(3+3)$ are all valid strings.

Based on this example, we propose an effective algorithm (in fact, the
set of constraints) to enumerate all poset leaders of given order. It
turns out that integer partition methods \cite{integer} can be very
helpful in order to quickly reach this goal. However, this method should
be systematically implemented. In particular, we use \textit{restricted}
integer partitions to find all poset leaders. 

Each integer partition should satisfy the following conditions: (1) Each
part should have at most a single $1$ (2) The leftmost and rightmost parts
should both be larger than $3$.  Given $|V|=n$ we do the following until
all of the parts cannot be partitioned without violating the restrictive
conditions:

\begin{framed}
\textbf{Enumerating Poset Leaders}

\-\hspace{0.2cm}$1.$ Find all integer partitions for $n-A$

\-\hspace{0.2cm}$2.$ For each one of these partitions, find those 
parts that can be further partitioned, and follow the 
steps in $1$.

\-\hspace{0.2cm}$3.$ Check for any redundant partitions and eliminate  them, and if any permutation of parts gives a new 
poset leader structure
\end{framed}

Here $A$ shows the number of anchor nodes. Basically, in the above set of principles the first constraint is to ignore the non-poset leader cases, while the second constraint is to ignore the cases where two or more parts can be merged and form already produced parts, hence, making this method more effective. 
The anchor nodes are certain non-leaf vertices, acting as a \textit{hub} for two or more branches. Each of the anchor nodes, with their associated branches can form a smaller integer partition satisfying the aforementioned constraints.
Also, unlike normal integer partitions the position of parts matters, so we should count some of permutations of different parts. In particular, two non-isomorphic poset leader topologies may have identical integer partitions, but with different ordering of parts.



\section{Conclusion} \label{conclusion}
In this paper, we analyzed the information leakage in public communications channels, where the joint density of entities in the channel can be modeled by Gaussian trees. We addressed two fundamental problems: $(1)$ In the max-min scenario we showed the special relation between $3$ variables that are be chosen by Alice, Bob, and Eve. $(2)$ Under the same scenario, we studied the impact of choosing different structures, on the maximin value and hence on the channel security.  
We proposed the grafting operation, which produces less favorable trees. Then, we ordered the tree structures using our defined pair-wise relationship. Interestingly, using our defined operation, we obtained partially ordered sets of trees, through which we classified all the Gaussian tree structures of given order into several partially ordered sets. We proved a particular feature for the sets: each poset have a unique MF and LF structures. In order to further simplify comparing  topologies, we modeled each poset as a directed super-graph, by which  any two trees are comparable if there is a directed path from one to the other.  
Moreover, we provided a systematic way of producing all trees in a poset through computing corresponding Tutte-like polynomials. We obtained certain fundamental results:  using the second highest degree term in each polynomial one can evaluate the security robustness of the given structure. Lastly, we introduced a restricted integer partition approach based on proposed principles to quickly enumerate all poset leaders of a given order without listing all non-leader structures.

\bibliography{reference}
\bibliographystyle{IEEEtran}


\newpage
\appendices

\section{Proof of Lemma \ref{lem:covariance}} \label{app:covariance}

We first prove the following:

For any $v_i\in V$, let us define $d_i$ as its degree. Then, we have,
\begin{align} \label{eq:cov_general}
		|\Sigma_T| = \dfrac{\prod\limits_{(v_i,v_j)\in E}[\sigma_{v_iv_i}\sigma_{v_jv_j}-\sigma^2_{v_iv_j}]}{\prod\limits_{v_i\in V}\sigma_{v_iv_i}^{d_i-1}}
\end{align}
The proof follows by induction. First, assume that the Gaussian tree $T$ has only one edge with two vertices $v_1$ and $v_2$, then we can immediately form $\Sigma_T$, and deduce 
$|\Sigma_T| = \sigma_{v_1v_1}\sigma_{v_2v_2}-\sigma^2_{v_1v_2}$, which follows the general formula in \eqref{eq:cov_general}. Next, let us assume that \eqref{eq:cov_general} is valid up to $T'=(V',E',W')$, where $|V'|=n-1$. Hence, we need to prove the validity of this equation for $T=(V,E,W)$ with $|V|=n$, where the tree $T$ can be obtained from $T'$ by adding one vertex, namely $v_n$. 
Without loss of generality, we assume that $v_n$ is connected to $v_{n-1}$. 
Since $v_n\perp v_i|v_{n-1}$, for all $v_i\in V\backslash\{v_n,v_{n-1}\}$, using the arguments given in Section \ref{system model}, we have $\sigma_{v_iv_n}=\sigma_{v_iv_{n-1}}\sigma_{v_{n-1}v_n}/\sigma_{v_{n-1}v_{n-1}}$. If we factorize ${\sigma_{v_{n-1}v_n}}/{\sigma_{v_{n-1}v_{n-1}}}$ from the last column, then subtract the $n-1$-th column from the $n$-th column, and replace the result with the $n$-th column, we obtain,

\begin{align*}
|\Sigma_T| = \dfrac{\sigma_{v_{n-1}v_n}}{\sigma_{v_{n-1}v_{n-1}}}
\begin{vmatrix}
\sigma_{v_1v_1} & \cdots & 0\\
\sigma_{v_2v_1} & \cdots & 0\\
\vdots & \ddots & \vdots\\
\sigma_{v_{n-1}v_1} & \cdots & 0\\
\sigma_{v_nv_1} & \cdots & x
\end{vmatrix}
\end{align*}

\noindent where $x=(\sigma_{v_nv_n}\sigma_{v_{n-1}v_{n-1}}-\sigma^2_{v_nv_{n-1}})/\sigma_{v_nv_{n-1}}$.
Using the last column, we can compute $|\Sigma_T|$ as follows,
\begin{align} \label{eq:sigmat}
|\Sigma_T| &= \dfrac{\sigma_{v_{n-1}v_n}}{\sigma_{v_{n-1}v_{n-1}}}x|\Sigma_{\backslash n\backslash n}|\notag\\
&= \dfrac{\sigma_{v_nv_n}\sigma_{v_{n-1}v_{n-1}}-\sigma^2_{v_nv_{n-1}}}{\sigma_{v_{n-1}v_{n-1}}}|\Sigma_{\backslash n\backslash n}|
\end{align}
\noindent where $|\Sigma_{\backslash n\backslash n}|$ is the determinant of submatrix of $\Sigma_T$ resulting after removing the $n$-th column and row. Note that since removing the last row and column of $\Sigma_T$ is the same as removing $v_n$ from $T$, hence we conclude that $\Sigma_{\backslash n\backslash n}=\Sigma_{T'}$. Therefore, \eqref{eq:sigmat} becomes
\begin{align} \label{eq:sigmat2}
|\Sigma_T| = \dfrac{\sigma_{v_nv_n}\sigma_{v_{n-1}v_{n-1}}-\sigma^2_{v_nv_{n-1}}}{\sigma_{v_{n-1}v_{n-1}}}|\Sigma_{T'}|
\end{align}
Observe that since the degree of $v_{n-1}$ is $2$, the fraction in \eqref{eq:sigmat2} has the same form as in \eqref{eq:cov_general}. Also, we know that $|\Sigma_{T'}|$ follows the general formula as well. This completes the inductive proof for the first part.

To prove the result in Lemma \ref{lem:covariance}, note that the Gaussian trees $T_1$ and $T_2$ differ in only one edge, namely $e$. Hence, we can write,
\begin{align*}
&|\Sigma_{T_1}| = |\Sigma_{\backslash n_2\backslash n_2}|\dfrac{\sigma_{n_1n_1}\sigma_{n_2n_2}-\sigma^2_{n_1n_2}}{\sigma_{n_1n_1}}\\
&|\Sigma_{T_2}| = |\Sigma_{\backslash n_2\backslash n_2}|\dfrac{\sigma_{v'v'}\sigma_{n_2n_2}-\sigma^2_{v'n_2}}{\sigma_{v'v'}}
\end{align*}
Since we assume that $|\Sigma_{T_1}|=|\Sigma_{T_2}|$, the result follows.

\section{Proof of Proposition \ref{prop:generalcut}} \label{app:generalcut}

Based on the values of edge-weights, two possible cases may happen for the maximin value of the tree $T_1$ and $T_2$ that are shown in Figure \ref{fig:generalcut}:

\begin{case} \label{case:1}
Let's assume $S(T_1,W)$ consists the nodes $n_1$, $n_2$, and $v$. In particular, using \eqref{eq:partial2} we obtain $S(T_1,W)=\dfrac{\sigma^2_{vn_1}(1-\sigma^2_{n_1n_2})}{1-\sigma^2_{n_1n_2}\sigma^2_{vn_1}}$. In other words, $S(T_1,W)$ is larger than all other partial correlations except the triplets which contain the pair $\{n_1,v\}$ and the nodes other than $n_2$. Now, the maximin value of $T_2$, say $S(T_2,W)$, is obtained by $S(T_2,W')=\max\{\min_i(\dfrac{\sigma^2_{vn_1}(1-\sigma^2_{vx_i})}{1-\sigma^2_{vx_i}\sigma^2_{vn_1}}),f(T_2,W')\}$, where $x_i$ are the adjacent nodes to $v$, other than $n_1$. Also, $f(T_2,W')$ can be any other value chosen from the max-min table corresponding to the tree $T_2$. We can show that, in any case we have $S(T_2,W')\geq S(T_1,W)$.
\end{case}
\begin{case} \label{case:2}
If $S(T_1,W)=\dfrac{\sigma^2_{v'x'_i}(1-\sigma^2_{x'_iy'_j})}{1-\sigma^2_{v'x'_i}\sigma^2_{x'_iy'_j}}$, where $x'_i$ is adjacent to $v'$ and $y'_j$ is adjacent to $x'_i$, then for $S(T_2,W')$, we have $S(T_2,W')=\min\{S(T_1,W),\dfrac{\sigma^2_{v'x'_i}(1-\sigma^2_{v'n_2})}{1-\sigma^2_{v'x'_i}\sigma^2_{v'n_2}}\}$. Hence, we can conclude that $S(T_1,W)\geq S(T_2,W')$.
\end{case}

From two above cases we can conclude that in Case \ref{case:1}, $T_2$ is more favorable than $T_1$, whereas in Case \ref{case:2} the opposite conclusion holds.

\section{Proof of Lemma~\ref{lemma:LF}} \label{app:LF}
The proof is quite straightforward. For any poset, given its poset
leader $T_M$ we can find and graft those leaf edges with no leaf
siblings. Obviously, we can begin with any list of edges with any
order. Hence, in any case we end up with a unique structure, namely $T_L$.

\section{Proof of Lemma \ref{lem:recursive}} \label{app:recursive}
By combining propositions $4$, $5$, $15$, and $18$ from \cite{tutte} we can conclude the following,
\begin{align*} 
&f(T_n;t,z) = h_{n-1}+ t[t(z+1)g_{n-1}+(1-tz)]\\
&f(T_{n-1};t,z) = h_{n-1}+ t(t+1)g_{n-1}
\end{align*}
where $h_{n-1}$ is the polynomial related to the unrooted tree $T_n/e$, i.e., the unrooted tree $T_n$ after deleting an edge $e$. Note that for simplicity we write $h_{n-1}$ and $g_{n-1}$ instead of $h_{n-1}(t,z)$ and $g_{n-1}(t,z)$, respectively. Using the above equations, we obtain the following recursive formula,
\begin{align}
f(T_n;&t,z) = f(T_{n-1};t,z)+ t(1-tz)[1-g_{n-1}(t,z)]
\end{align}
If we proceed up to level $m$, the result in \eqref{eq:recursive} follows.

\section{Proof of Corollary \ref{cor:unique}} \label{app:unique}
First, suppose $f(T_n;t,z) = f(T_{n-m};t,z)$ then using \eqref{eq:recursive} we should have $t(1-tz)[m-\sum_{k=1}^m g_{n-k}(t,z)]=0$, or $\sum_{k=1}^m g_{n-k}(t,z)=m$. Recall that all $g_{n-k}(t,z)$ are polynomials associated with rooted trees, so the only possibility is $g_{n-k}(t,z)=1$, for all $1\leq k\leq m$, a contradiction.

Second, consider two trees $T_{n-m}$ and $T_{n-l}$, at different levels having nearest common ancestor $T_n$. Then using \eqref{eq:recursive} we have the following:
\begin{align*}
f(T_n;&t,z) =\begin{cases} f^L(T_{n-m};t,z)+ t(1-tz)[m-\sum_{k=1}^m g^L_{n-k}] \\ f^R(T_{n-l};t,z)+ t(1-tz)[l-\sum_{k=1}^l g^R_{n-k}] \end{cases} 
\end{align*} 

suppose, $f^L(T_{n-m};t,z)=f^R(T_{n-l};t,z)$ then we obtain,
\begin{align} \label{eq:equation}
\sum_{k=1}^m g^L_{n-k}-\sum_{k=1}^l g^R_{n-k}=m-l
\end{align}

If $m=l=1$, i.e., both trees $T_{n-m}$ and $T_{n-l}$ are obtained from $T_n$ by a single grafting operation. But, since they have two different structures, the corresponding polynomials for the rooted trees $g^L_{n-1}$ and $g^R_{n-1}$ are distinct, because in \cite{tutte} it is shown that Tutte-like polynomial for rooted trees is graph invariant. Hence, $f^L(T_{n-1};t,z)$ and $f^R(T_{n-1};t,z)$ are distinct. So, the trees at the same level that are obtained from their parent through one grafting operation are distinct.

Finally, suppose we have $m\neq l$. Let's define $y_i=g_i-1$ for all polynomials $g_i(t,z)$. Now, using \eqref{eq:equation} we have,
\begin{align} \label{eq:equation2}
\sum_{k=1}^m y^L_{n-k}-\sum_{k=1}^l y^R_{n-k}=0
\end{align}

The highest degree term corresponds to the rooted trees, resulted by eliminating the edges $e$ and $e'$ and putting the common node between these two edges as a root. Also, the highest degree terms are resulted from the subtrees associated to $y^L_i$ and $y^R_i$ and no other proper subsets of these trees. Hence, from \eqref{eq:equation2} and assuming that original tree $T_n$ has the size $|E|$, then we can conclude, 
\begin{align} \label{eq:highestterm}
t^{|E|-2}\sum_{k=1}^m (1+z)^{L_{n-k}}=t^{|E|-2}\sum_{k=1}^l (1+z)^{R_{n-k}}
\end{align}

where $L_{n-k}$ and $R_{n-k}$ are non-negative integer powers, which show the largest number of internal edges for each tree associated to polynomials $y^L_{n-k}$ and $y^R_{n-k}$.

Equation \eqref{eq:highestterm} should hold for all values of $t$ and $z$. Let's set $t=1$ and $z=0$, we obtain $m=l$, a contradiction.

\section{Proof of Lemma~\ref{lem:specialbranches}} \label{app:specialbranches}

The second highest degree term consists of those subtrees of $T$ with exactly one edge missing. This edge should be a leaf edge, since otherwise the resulting structure does not become a tree. Now, consider those edges that are connected to the deleted leaf edge. They either become leaf or keep their previous states. Hence, the second highest degree term has the form $t^{|E|-1}(\alpha(1+z)^{|I|-1}+\beta(1+z)^{|I|})$, where $t^{|E|-1}$ is due to the deleted edge. Also, $\alpha$ is the number of resulted subtrees with $|I|-1$ internal edges, and $\beta$ is the number of those subtrees with $|I|$ internal edges. 
Hence, the number of leaf edges with no leaf siblings is $\alpha$.


\end{document}